\theoremstyle{plain}
\newtheorem{thm}{\protect\theoremname}
  \theoremstyle{definition}
  \newtheorem{defn}[thm]{\protect\definitionname}
  \theoremstyle{remark}
  \newtheorem{rem}[thm]{\protect\remarkname}
  \theoremstyle{plain}
  \newtheorem{prop}[thm]{\protect\propositionname}
  \theoremstyle{remark}
  \newtheorem*{rem*}{\protect\remarkname}
  \providecommand{\definitionname}{Definition}
  \providecommand{\propositionname}{Proposition}
  \providecommand{\remarkname}{Remark}
\providecommand{\theoremname}{Theorem}
\begin{document}

\title{A forward--backward random process for the spectrum of 1D Anderson
operators}

\author{Raphael Ducatez}
\maketitle
\begin{abstract}
We give a new expression for the law of the eigenvalues of the discrete
Anderson model on the finite interval $[0,N]$, in terms of two random
processes starting at both ends of the interval. Using this formula,
we deduce that the tail of the eigenvectors behaves approximately
like $\exp(\sigma B_{|n-k|}-\gamma\frac{|n-k|}{4})$ where $B_{s}$
is the Brownian motion and $k$ is uniformly chosen in $[0,N]$ independently
of $B_{s}$. A similar result has recently been shown by B. Rifkind
and B. Virag in the critical case, that is, when the random potential
is multiplied by a factor $\frac{1}{\sqrt{N}}$
\end{abstract}
We are interested in the one dimensional discrete Anderson model on
a finite domain $[0,N]$. This model is very classical and has been
studied extensively since the 70s. See for example the monograph of
Carmona Lacroix \cite{carmona2012spectral}. Compared to higher dimensions
case, it can be considered as a solved problem. However new approaches
can always shed new light on this famous system. 

The usual approach to tackle this system is the transfer matrix framework.
The eigenvectors of the random Schr\"odinger operator satisfy a recursive
relation of order 2, $u_{n+2}=(V_{n+1}-\lambda)u_{n+1}-u_{n}$, which
can be written in a matrix form. Using this relation, one can obtain
an eigenvector everywhere on $[0,N]$ from the product of the transfer
matrices applied to the boundary values. The advantage of such a formulation
is that one can then use the very powerful results for random matrices
product and from ergodic theory such as the Oseledets theorem. 

In the historical approach of Kunz and Souillard \cite{kunz1980spectre}
or in the proof from the book \cite{cycon2009schrodinger} a change
of variables is used to deal with the conditional probability of the
potential $V$ with a fixed eigenvalue $\lambda$. In this short note,
we propose another calculation of this conditional probability. We
define a random variable $k$ whose random law is close to the uniform
law on $[0,N]$. This variable splits the interval $[0,N]$ into two
part $[0,k]$ and $[k,n]$. On the left part, the matrices product
is made from left to right. On the right part, the matrices product
is made from right to left. And far from the cut, the laws of the
matrices are very close to be independent. 

The main interest of our approach is that the connection with the
theorems for products of random matrices is more transparent in this
setup. From this formula we can recover several known results. Relying
on the positivity of the Lyapunov exponent, the formula can be used as
a new proof of exponential Anderson localization of eigenvectors where
the center of localization is uniformly distributed on $[0,N]$. Moreover,
because it gives a explicit random law, we can go beyond the exponential
decay of the eigenvectors far from the center of localization and
give an explicit law for their tail. 

In the first section, we detail the model and we state our result.
Then we give some applications of our theorem in the second section.
In particular, we write an asymptotic result similar to the result
of Rifkind and Virag in \cite{rifkind2016eigenvectors}. In Section~3, 
we finally give the proof of the theorem.

\paragraph*{Acknowledgement: }

We would like to thank Mathieu Lewin for his encouragement, his interest
and his relevant comments. This project has received funding from
the European Research Council (ERC) under the European Union's Horizon
2020 research and innovation program (grant agreement MDFT No 725528).

\section{Model and main result}

We consider the discrete one dimensional Anderson model \cite{PhysRev.109.1492}
defined on $[1,N]$ through the operator
\[
H^{(N)}=-\Delta^{(N)}+V_{\omega}^{(N)}.
\]
Here $V_{\omega}^{(N)}$ is a random iid potential and 
\[
\Delta^{(N)}(x,y)=\begin{cases}
1 & \text{if \ensuremath{|x-y|=1}}\\
0 & \text{otherwise}
\end{cases}
\]
 is the usual discrete Laplacian. Hence $H$ is just the $N\times N$
symmetric matrix 
\[
H^{(N)}=\begin{pmatrix}V_{1} & 1\\
1 &  & \ddots\\
 & \ddots & \ddots & \ddots\\
 &  & \ddots &  & 1\\
 &  &  & 1 & V_{N}
\end{pmatrix}.
\]

We make the following assumption:

\begin{description}
\item[(H1)] The random law of $V_{\omega}$ is absolutely continuous with
respect to the Lebesgue measure. 
\end{description}

\subsection{Transfer matrices}

Transfer matrices have been one of the main tool to study the
1D Anderson model. One is interested in the eigenvectors, $(Hu)_{n}=\lambda u_{n}$,
which satisfy the recurrence relation 
\begin{equation}
\forall n\in[0,N],\quad-u_{n+1}+(V_{\omega}-\lambda)u_{n}-u_{n-1}=0,\label{eq:Eigenvector}
\end{equation}
with $u_{-1}=u_{N+1}=0$ such that the formula is valid for $n=0$
and $n=N$. This can be written with transfer matrices 
\[
\begin{pmatrix}u_{n+1}\\
u_{n}
\end{pmatrix}=T(v_{\omega}(n)-\lambda)\begin{pmatrix}u_{n}\\
u_{n-1}
\end{pmatrix}
\]
where
\[
\forall x,\quad T(x)=\begin{pmatrix}x & -1\\
1 & 0
\end{pmatrix}.
\]
We can then write the matrix product 
\[
M_{n}(\lambda)=\prod_{k=1}^{n}T(v_{\omega}(k)-\lambda)
\]
and we have 
\[
\begin{pmatrix}u_{n+1}\\
u_{n}
\end{pmatrix}=M_{n}(\lambda)\begin{pmatrix}1\\
0
\end{pmatrix}.
\]
The parameter $\lambda$ is an eigenvalue if and only if there exist
$c\in\mathbb{R}$ such that 
\[
M_{N}(\lambda)\begin{pmatrix}1\\
0
\end{pmatrix}=\begin{pmatrix}0\\
c
\end{pmatrix},
\]
the condition $u_{N+1}=0$ is then satisfied. 

It will be convenient to denote the vector $\begin{pmatrix}u_{n+1}\\
u_{n}
\end{pmatrix}$ as a complex number in the fashion 
\[
u_{n+1}+iu_{n}=z_{n}=r_{n}e^{i\phi_{n}}
\]
where $r_{n}\in\mathbb{R}_{+}\text{ and }\phi_{n}\in\mathbb{R}/2\pi\mathbb{Z}$.
We also introduce the lifting of $\phi_{k}$, which we denote by $\theta_{k}$.
This is just a discrete version of the continuous lifting from $\mathbb{R}/2\pi\mathbb{Z}$
to $\mathbb{R}$ into the discrete case. It is defined recursively
by 
\[
\text{ }\theta_{k}=\begin{cases}
\theta_{0}=0\\
\phi_{k}\ [2\pi] & \forall k\in[0,N]
\end{cases}
\]
and

\[
\theta_{k}-\frac{\pi}{2}\leq\theta_{k+1}<\theta_{k}+\frac{3\pi}{2}.
\]
It can be seen that $\phi_{k+1}$ does not depend on $r_{k+1}$ but
only on $\phi_{k}$ and $T_{\lambda}(v_{k}-\lambda)$. Therefore,
for simplicity of notation, we use the same notation $T$ for the
operator on $\mathbb{R}/2\pi\mathbb{Z}$: $\phi_{k+1}=T_{\lambda}(v_{k})\phi_{k}$
.

Note that it is possible to recover $r_{k}$ from $\phi_{0},\phi_{1},...,\phi_{N}$
with the formula 
\[
\frac{r_{k+1}}{r_{k}}=\frac{r_{k+1}}{u_{k+1}}\frac{u_{k+1}}{r_{k}}=\frac{\cos\phi_{k}}{\sin\phi_{k+1}}.
\]
 For this reason, in the rest of the paper we focus mostly on $(\phi_{k})_{k=0\dots N}$.
We note $\mathcal{F}(\lambda)=(\phi_{k})_{k=0\dots N}$ which has
been constructed from the recursive formula $\phi_{k+1}=T_{\lambda}(v_{k})\phi_{k}$
and $\phi_{0}=0$. And for $\lambda$ an eigenvalue, we note $\mathcal{P}h(\lambda)=(\phi_{k})_{k=0\dots N}$
the phase of the corresponding eigenvector. Note that it is equal
to $\mathcal{F}(\lambda)$ with the condition $\phi_{N}=\frac{\pi}{2}\;[\pi]$.

\subsection{Forward and backward processes}

In this subsection, we define two natural random laws on the chain
$X=(\phi_{k})_{k=0,..,N}$. The first one is the Markov chain starting
from $\phi_{0}$ with an initial law $\mu_{f}$ defined on $\mathbb{S}^{1}$
and transition law $\phi_{k}\rightarrow\phi_{k+1}=T(v_{k})\phi_{k}$
with a random measure $\nu$ for $v_{k}$. We call it the forward
process. The second one is the Markov chain starting from $\phi_{N}$
with an initial law $\mu_{b}$ and transition law $\phi_{k}\rightarrow\phi_{k-1}=T_{\lambda}^{-1}(v_{k-1})\phi_{k}$
with a random measure $\nu$ for $v_{k}$ and we call it the backward
process. Then we introduce a cut in $[0,N]$, and we can define the
random law product between these two processes which we call the forward--backward
process. 

For a proper definition we use test functions on $\mathbb{R}^{N+1}$
which are bounded and continuous. 

\begin{defn}[Forward and backward processes]
The probability $\mathcal{P}_{f}$ on
$\mathbb{R}^{N+1}$, defined by 
\[
\mathcal{P}_{f}(F)=\idotsint d\mu_{f}(\phi_{0})d\nu(v_{1})\cdots d\nu(v_{n})F(X)
\]
for any test function $F$, is called the \emph{forward process}.
Similarly,  the probability $\mathcal{P}_{b}$ on $\mathbb{R}^{N+1}$ defined by 
\[
\mathcal{P}_{b}(F)=\idotsint d\nu(v_{1})...d\nu(v_{N})d\mu_{b}(\phi_{N})F(X)
\]
for any test function $F$, is called the \emph{backward process}.
\end{defn}

\begin{rem}
\label{rem:Change}If we introduce $\xi_{0,n}:\phi_{0}\rightarrow\phi_{n}^{f}=\prod_{k=0}^{n-1}T(v_{k})\phi_{0}$
and if for almost surely any $v_{1},v_{2},...,v_{n}$, $\mu_{b}$
and the push measure $\xi(\mu_{f})$ are equivalent measures, then
we remark that for any $F$: 

$\begin{aligned}\mathcal{P}_{b}(F) & =\idotsint d\nu(v_{1})...d\nu(v_{n})d\mu_{f}(X_{0})\frac{d\mu_{b}(X_{n})}{d\xi(\mu_{f}(X_{0}))}\big|_{v_{1},...,v_{n}}F(X)\\
 & =\mathcal{P}_{f}\Bigg(F\frac{d\mu_{b}(X_{n})}{d\xi(\mu_{f}(X_{0}))}\big|_{v_{1},...,v_{n}}\Bigg).
\end{aligned}
$\end{rem}

\begin{defn}[Forward-Backward process]
For $k\in[0,N]$, we define $\mathcal{P}_{f,0..k}\otimes\mathcal{P}_{b,k+1,...,N}$
a forward process for $X^{f}=\phi_{0}^{f},\phi_{1}^{f}...,\phi_{k}^{f}$
with $\phi_{0}^{f}=0$, ($\mu_{f}=\delta_{0}$) and a backward process
for $X^{b}=\phi_{N}^{b},\phi_{N-1}^{b}...,\phi_{k}^{b}$, with $\phi_{N}^{b}=\frac{\pi}{2}$
($\mu_{b}=\delta_{\frac{\pi}{2}}$) which are independent from each
other.
\end{defn}

\subsection{Main result}

We are now ready to state the main theorem of our paper.
\begin{thm}[Law of the spectrum of the 1D Anderson model]
For any test function
$G(\lambda,X)$, we have \label{thm:(Change-of-direction)}
\begin{align}
 & \mathbb{E}\Big[\sum_{\lambda\in\sigma(H),X=\mathcal{P}h(\lambda)}G(\lambda,X)\Big]\nonumber \\
 & \qquad=\int_{\mathbb{R}}d\lambda\sum_{k=1}^{N}\mathbb{E}_{\mathcal{P}_{f,1..k}\otimes\mathcal{P}_{b,k+1,...,N}}\Big[G(\lambda,X)\delta_{\phi_{k}^{f}-\phi_{k}^{b}[\pi]}\sin^{2}(\phi_{k}^{f})\Big]\label{eq:Change-of-direction}
\end{align}
that we can rewrite as
\begin{align}
 & \mathbb{E}\Big[\frac{1}{N}\sum_{\lambda\in\sigma(H),X=\mathcal{P}h(\lambda)}G(\lambda,X)\Big]\nonumber \\
 & \qquad=\int_{\mathbb{R}}\rho(\lambda)d\lambda\Bigg(\frac{1}{N}\sum_{k=1}^{N}\mathbb{E}_{\mathcal{P}_{f,1..k}\otimes\mathcal{P}_{b,k+1,...,N}}\Big[G(\lambda,X)\frac{\delta_{\phi_{k}^{f}-\phi_{k}^{b}[\pi]}\sin^{2}(\phi_{k}^{f})}{\rho(\lambda)}\Big]\Bigg)\label{eq:Change-of-direction-1}
\end{align}
with $\rho(\lambda)$ the density of state.
\end{thm}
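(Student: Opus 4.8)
The plan is to compute the left-hand side directly as an integral over the potential $(v_1,\dots,v_N)$ against the joint ``density of eigenvalue--eigenvector pairs'' and then to recognize the resulting expression as the forward--backward process. First I would use the characterization of eigenvalues via the transfer matrices: $\lambda\in\sigma(H)$ exactly when $\theta_N(\lambda)\in\tfrac{\pi}{2}+\pi\mathbb{Z}$, where $\theta_N(\lambda)$ is the lifted phase built from $\phi_0=0$ by the forward recursion $\phi_{k+1}=T_\lambda(v_k)\phi_k$. Since $\theta_N(\cdot)$ is smooth and strictly monotone in $\lambda$ (a standard Sturm-oscillation fact: $\partial_\lambda\theta_N>0$), the sum over the spectrum can be rewritten, via the coarea/change-of-variables formula, as
\[
\sum_{\lambda\in\sigma(H)}G(\lambda,\mathcal{F}(\lambda))
=\sum_{m}G(\lambda_m,\mathcal{F}(\lambda_m))
=\int_{\mathbb{R}}G(\lambda,\mathcal{F}(\lambda))\,\Big|\partial_\lambda\theta_N(\lambda)\Big|\,\sum_{m}\delta\big(\theta_N(\lambda)-\tfrac{\pi}{2}-m\pi\big)\,d\lambda,
\]
so that after taking expectation over the potential we get an integral $\int d\lambda\,\mathbb{E}\big[G(\lambda,\mathcal F(\lambda))\,\partial_\lambda\theta_N(\lambda)\,\delta_{\theta_N(\lambda)=\pi/2\,[\pi]}\big]$.

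\textbf{The derivative $\partial_\lambda\theta_N$.} The key algebraic step is to express $\partial_\lambda\theta_N(\lambda)$ in terms of the phases and radii. Differentiating the eigenvector recursion (or equivalently using the standard identity relating the $\lambda$-derivative of the Prüfer angle to a sum of squared normalized coordinates), one obtains, when $\lambda$ is an eigenvalue,
\[
\partial_\lambda\theta_N(\lambda)=\sum_{k=1}^{N}\frac{\sin^2(\phi_k)}{r_k^2}\cdot r_N^2
= r_N^2\sum_{k=1}^N \frac{\sin^2 \phi_k}{r_k^2},
\]
or more precisely a telescoping form $\partial_\lambda\theta_N=\sum_{k=1}^N \big(r_N/r_k\big)^2\sin^2\phi_k$ up to the normalization of the eigenvector; using $r_{k+1}/r_k=\cos\phi_k/\sin\phi_{k+1}$ one can trade the radius ratios for products of sines and cosines of the phases. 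This is the place where the ``$\sin^2(\phi_k^f)$'' weight in the theorem is born: the $k$-th term of this sum, after the change of direction described below, becomes exactly the $k$-th summand on the right-hand side.

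\textbf{Change of direction at the cut $k$.} For each fixed $k$, I would perform the change of variables that turns the single forward recursion into a forward process on $[1,k]$ and a backward process on $[k+1,N]$. Concretely, the law of $(v_1,\dots,v_N)$ is a product $d\nu(v_1)\cdots d\nu(v_N)$; on the segment $[k+1,N]$ I reverse the order of integration and use that each transfer map $T_\lambda(v)$ on $\mathbb{S}^1$ is a diffeomorphism, so pushing the deterministic endpoint condition $\phi_N=\pi/2$ backward through $T_\lambda^{-1}(v_{N-1}),\dots,T_\lambda^{-1}(v_k)$ produces the backward Markov chain of the definition. The delta function $\delta_{\theta_N=\pi/2[\pi]}$, which ties the forward chain to the fixed right endpoint, gets split: part of it is ``used up'' to define the backward chain's initial condition $\mu_b=\delta_{\pi/2}$, and what remains is the matching condition $\delta_{\phi_k^f-\phi_k^b\,[\pi]}$ at the cut. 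The Jacobian of this reversal is precisely what is needed so that Remark~\ref{rem:Change} applies and no extra factor survives beyond the $\sin^2(\phi_k^f)$ already identified; one must check that $\mu_b=\delta_{\pi/2}$ pushed forward and $\mu_f=\delta_0$ pushed forward are mutually absolutely continuous on the relevant fibers, which is where hypothesis (H1) enters. Summing over $k=1,\dots,N$ reassembles $\partial_\lambda\theta_N$ and yields \eqref{eq:Change-of-direction}; dividing by $N$ and by $\rho(\lambda)$ (finite and positive by the existence of the integrated density of states) gives \eqref{eq:Change-of-direction-1}.

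\textbf{Main obstacle.} The delicate point is the bookkeeping of the Dirac masses and Jacobians in the change of direction: one is manipulating a distributional identity (a $\delta$ on the codimension-one eigenvalue locus, then a $\delta$ at the cut) simultaneously with a reparametrization of $N$ integration variables, and it is easy to lose or double-count a factor of $\sin\phi_k$, $\cos\phi_k$, or $r_k$. I would make this rigorous by first smoothing the eigenvalue $\delta$ (replacing it by a narrow bump and passing to the limit at the end, using monotonicity of $\theta_N$ in $\lambda$ so the limit is clean), and by carrying out the direction reversal one transfer matrix at a time, explicitly tracking the $2\times2$ Jacobian $|\det T_\lambda(v)|=1$ on vectors versus the induced Jacobian on the projective/angular variable, which is $1/\|T_\lambda(v)\hat z\|^2$ and hence exactly accounts for the radius factors $r_{k+1}/r_k$. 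Checking that these cancellations conspire to leave precisely $\sin^2(\phi_k^f)$ and the single matching delta $\delta_{\phi_k^f-\phi_k^b[\pi]}$ is the technical heart of the argument.
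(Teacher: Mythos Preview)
Your strategy is essentially the paper's: characterize eigenvalues by $\theta_N(\lambda)\in\tfrac{\pi}{2}+\pi\mathbb{Z}$, smooth the resulting delta by an $\epsilon$-window, change variables to produce the factor $\partial_\lambda\theta_N$, expand this derivative as a sum over $k$, and interpret each summand as a measure change from forward to backward on $[k+1,N]$ via Remark~\ref{rem:Change}, finally sending $\epsilon\to0$.

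Two corrections are worth making. First, your explicit formula has the radius ratio inverted: the correct identity is
\[
\partial_\lambda\theta_N(\lambda)=\sum_{k}\Big(\frac{r_k}{r_N}\Big)^{2}\sin^2\phi_k=\frac{1}{r_N^{2}}\sum_k u_k^{2},
\]
not $(r_N/r_k)^2$; this follows from the Wronskian identity $W_n-W_{n-1}=u_n^2$ together with $\partial_\lambda\phi_n=W_n/r_n^2$ and $u_k=r_k\sin\phi_k$. Second --- and this dissolves what you flag as the main obstacle --- the paper does not compute the radius factor and then separately track a projective Jacobian hoping for a cancellation. Applying the chain rule directly to $\phi_N=\big(\prod_{i>k}T(v_i-\lambda)\big)\phi_k$ yields
\[
\partial_\lambda\theta_N=\sum_{k}\frac{d\phi_N}{d\phi_k}\Big|_{v_{k+1},\dots,v_N}\cdot\sin^2\phi_k,
\]
and the factor $\tfrac{d\phi_N}{d\phi_k}$ is \emph{exactly} the Radon--Nikodym derivative appearing in Remark~\ref{rem:Change}. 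After inserting an artificial integration $\int_{\mathbb{S}^1}d\phi\,\delta_{\phi_k}(\phi)$, the change of direction is the single substitution $d\phi\cdot\tfrac{d\phi_N}{d\phi}=d\phi_N$, and the smoothed indicator $\tfrac{1}{2\epsilon}1_{\phi_N\in I_\epsilon}$ collapses to $\delta_{\pi/2}$. No step-by-step bookkeeping of $\|T_\lambda(v)\hat z\|^{-2}$ factors is required, because the derivative and the Jacobian are one and the same object.
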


Recall that $Ph(\lambda)$ is the phase of the eigenvector corresponding
to the eigenvalue $\lambda$.

This formula is to be understood as follows. One chooses $k$ randomly
in $[1,N]$ which splits the segment into two parts $[1,k]$ and $[k,N]$.
On the left, we obtain a forward process, on the right, we obtain
a backward process. The choice of $k$ is not exactly uniform on $[0,N]$
because of the condition $\delta_{\phi_{k}^{f}-\phi_{k}^{b}}\sin^{2}(\phi_{k})$.
However, for large $N$, and for any $k\leq N$ not too close to $0$
or $N$, the laws of $\phi_{k}^{f}$ and $\phi_{k}^{b}$ are very close
to their invariant measure and then do not depend on $k$. Therefore
the law of $k$ becomes close to the uniform.

There is still a dependence between the two processes at the connection
between the forward and backward processes. However, because of the mixing
property of the matrix product, the correlations decay exponentially
fast away from the cut $k$. 

We recall that a stationary process $X_{k}$ is called $(\alpha_{n})_{n\in\mathbb{N}}-$
mixing if
\[
\forall k,\quad\max_{A,B}|\mathbb{P}(X_{k}\in A,X_{k+n}\in B)-\mathbb{P}(X_{k}\in A)\mathbb{P}(X_{k+n}\in B)|\leq\alpha_{n}
\]

The following is a well known result.
\begin{prop}
There exists a constant $C>0\text{ and }0<\kappa<1$ such that the
process $\phi_{k}$ is $(C\kappa^{n})_{n\in\mathbb{N}}-$mixing.\label{lem:mixing} 
\end{prop}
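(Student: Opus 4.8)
The plan is to establish the exponential mixing of the phase process $\phi_k$ by identifying it with the action of a product of i.i.d.\ random matrices on the projective line $\mathbb{P}^1(\mathbb{R})\cong\mathbb{R}/\pi\mathbb{Z}$, and then invoking the standard contraction/Furstenberg theory. First I would fix the energy $\lambda$ and note that the map $\phi_k\mapsto\phi_{k+1}=T_\lambda(v_k)\phi_k$ is the projective action of the random matrix $T(v_k-\lambda)\in SL_2(\mathbb{R})$; since the $v_k$ are i.i.d.\ with an absolutely continuous law under (H1), the group generated by the support of the law of $T(v_k-\lambda)$ is non-compact and strongly irreducible (it is not contained in any compact subgroup, nor does it preserve a finite union of lines), which are exactly the Furstenberg conditions. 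Consequently there is a unique stationary measure $\mu_\lambda$ on $\mathbb{P}^1$, the Lyapunov exponent is positive, and — this is the key input — the Markov operator acting on H\"older functions on $\mathbb{P}^1$ is quasi-compact with a spectral gap; equivalently, there exist $C>0$ and $0<\kappa<1$ with
\[
\bigl\|\,\mathbb{P}(\phi_{k+n}\in\cdot\mid\phi_k=x)-\mu_\lambda\,\bigr\|\le C\kappa^{n}
\]
uniformly in $x$, in total variation or in a Wasserstein/H\"older-dual norm.

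Next I would convert this uniform convergence to equilibrium into the $(\alpha_n)$-mixing bound stated in the proposition. Starting the chain from its stationary law $\mu_\lambda$, for any Borel sets $A,B$ one writes $\mathbb{P}(\phi_k\in A,\ \phi_{k+n}\in B)=\int_A \mathbb{P}(\phi_{k+n}\in B\mid \phi_k=x)\,d\mu_\lambda(x)$, and subtracts $\mathbb{P}(\phi_k\in A)\mathbb{P}(\phi_{k+n}\in B)=\mu_\lambda(A)\mu_\lambda(B)=\int_A \mu_\lambda(B)\,d\mu_\lambda(x)$; the difference is bounded by $\mu_\lambda(A)\cdot \sup_x\|\mathbb{P}(\phi_{k+n}\in\cdot\mid\phi_k=x)-\mu_\lambda\|_{TV}\le C\kappa^n$. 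Taking the supremum over $A,B$ gives the claim with the same $C,\kappa$, and by stationarity the bound is independent of $k$.

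There are two points that require care rather than being entirely routine. The first is that the contraction estimates in the random matrix literature are usually phrased for the action on $\mathbb{P}^1$ (lines), whereas $\phi_k$ lives on the full circle $\mathbb{R}/2\pi\mathbb{Z}$; however, as the paper itself observes, $\phi_{k+1}$ depends on $\phi_k$ only through its class mod $\pi$ for the transition, and the lift is deterministic given the increments, so mixing on $\mathbb{R}/2\pi\mathbb{Z}$ follows from mixing on $\mathbb{R}/\pi\mathbb{Z}$ with at most a harmless factor of $2$ in the constant. The second, and the main obstacle, is obtaining the spectral gap with constants $C,\kappa$ that are \emph{uniform in $\lambda$} over the relevant energy range (so that the proposition can be used inside the $\lambda$-integral of Theorem~\ref{thm:(Change-of-direction)}); this requires checking that the Furstenberg irreducibility and the lower bound on the contraction coefficient degrade only mildly, e.g.\ locally uniformly, as $\lambda$ varies, which one gets from the explicit form of $T(x)$ together with assumption (H1) ensuring the one-step transition kernel has a density bounded below on compact sets. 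Since the statement as written fixes no $\lambda$-uniformity, the cleanest route is to prove it for each fixed $\lambda$ by the spectral-gap argument above and remark that the constants can be chosen locally uniform; I would cite \cite{carmona2012spectral} for the quasi-compactness of the projective Markov operator and for positivity of the Lyapunov exponent.
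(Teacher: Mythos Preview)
The paper does not give a proof of this proposition at all: immediately after the statement it simply writes ``For a proof, see \cite[Proposition~IV.3.12]{carmona2012spectral}.'' Your sketch is precisely the standard argument behind that reference---Furstenberg's strong irreducibility and non-compactness for the group generated by the $T(v-\lambda)$, positivity of the Lyapunov exponent, quasi-compactness (spectral gap) of the projective Markov operator, and then the routine passage from uniform geometric ergodicity to $\alpha$-mixing---so you are in full agreement with the paper, only more explicit. Your closing remark that you would cite \cite{carmona2012spectral} for the quasi-compactness is exactly what the paper does in lieu of a proof.

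One small comment on your two ``points of care.'' The $\mathbb{R}/2\pi\mathbb{Z}$ versus $\mathbb{P}^1=\mathbb{R}/\pi\mathbb{Z}$ issue is slightly more delicate than a factor of~$2$: since each $T(v-\lambda)\in SL_2(\mathbb{R})$ acts by an orientation-preserving circle homeomorphism covering the projective map, contraction on $\mathbb{P}^1$ does lift, but uniqueness of the invariant measure on the full circle is not automatic from uniqueness on $\mathbb{P}^1$ alone (a priori there could be two lifts differing by the deck transformation $\phi\mapsto\phi+\pi$). In the present setting this is harmless because the relevant quantities---$\sin^2\phi$, the ratio $r_{k+1}/r_k$, and the conditioning $\phi_k^f\equiv\phi_k^b\ [\pi]$ in Theorem~\ref{thm:(Change-of-direction)}---are all $\pi$-periodic, so mixing of the projective chain is what is actually used downstream. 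Your observation about local uniformity in $\lambda$ is well taken and indeed needed for the applications, though the proposition as stated does not claim it.
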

For a proof, see \cite[proposition IV.3.12]{carmona2012spectral}.

\section{Applications}

We present here three application of our result. The first one is
a formula for the integrated density of states. The second one is
about the form of the tails of the eigenvectors. We then finish with
a temperature profile from \cite{2017JSP...tmp...71D}.

\subsection{A formula for the integrated density of states}

The following equality can be found as well in \cite{carmona2012spectral} (proposition
VIII.3.10 and problem VIII.6.8). 
\begin{prop}
For $\lambda\in\mathbb{R}$, let $\mu_{\lambda}(d\phi)=m_{\lambda}(\phi)d\phi$
be the $T_{\lambda}$-invariant measure on $\mathbb{R}/\mathbb{Z}$.
The density of states 
\[
dN(\lambda)=\lim_{N\rightarrow\infty}\frac{1}{N}\#\big\{\sigma(H)\cap[\lambda,\lambda+d\lambda]\big\}
\]
 is given by 
\[
\frac{dN(\lambda)}{d\lambda}=\int_{\mathbb{\mathbb{R}}/2\mathbb{\pi\mathbb{Z}}}\sin^{2}(\phi)m_{\lambda}(\phi)m_{\lambda}\Big(\frac{\pi}{2}-\phi\Big)d\phi.
\]
\end{prop}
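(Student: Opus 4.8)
The strategy is to apply the main Theorem~\ref{thm:(Change-of-direction)} to a carefully chosen test function and then take $N\to\infty$. Concretely, I would set $G(\lambda,X)=g(\lambda)$ for a bounded continuous test function $g$ depending only on the energy, so that the left-hand side of \eqref{eq:Change-of-direction-1} becomes $\mathbb{E}\big[\tfrac1N\sum_{\lambda\in\sigma(H)}g(\lambda)\big]$, which by definition converges to $\int g(\lambda)\,dN(\lambda)$ as $N\to\infty$. The right-hand side is
\[
\int_{\mathbb{R}} d\lambda\; g(\lambda)\;\frac1N\sum_{k=1}^{N}\mathbb{E}_{\mathcal{P}_{f,1..k}\otimes\mathcal{P}_{b,k+1,..,N}}\Big[\delta_{\phi_k^f-\phi_k^b\,[\pi]}\sin^2(\phi_k^f)\Big],
\]
so it suffices to show that for (Lebesgue-a.e.) fixed $\lambda$ the inner Ces\`aro average converges to $\int_{\mathbb{R}/2\pi\mathbb{Z}}\sin^2(\phi)\,m_\lambda(\phi)\,m_\lambda(\tfrac\pi2-\phi)\,d\phi$.

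The key step is to identify the law of the pair $(\phi_k^f,\phi_k^b)$ under the forward--backward product measure. Since the forward process started at $\phi_0^f=0$ runs through $k$ iid transfer steps and the backward process started at $\phi_N^b=\tfrac\pi2$ runs through $N-k$ iid transfer steps, $\phi_k^f$ and $\phi_k^b$ are \emph{independent}, with laws given respectively by $k$-fold and $(N-k)$-fold iterates of the Markov kernel (resp. its time-reversal). By the uniqueness and exponential convergence to the $T_\lambda$-invariant measure $\mu_\lambda$ — this is exactly the mixing statement of Proposition~\ref{lem:mixing}, or more precisely the geometric ergodicity of the projective Markov chain, valid for a.e.\ $\lambda$ under (H1) — the law of $\phi_k^f$ converges to $\mu_\lambda=m_\lambda(\phi)d\phi$ and the law of $\phi_k^b$ converges to the invariant measure of the reversed kernel, which one checks is $m_\lambda(\tfrac\pi2-\phi)d\phi$ (the reflection $\phi\mapsto\tfrac\pi2-\phi$ intertwines the forward and backward dynamics because $T_\lambda(v)^{-1}$ is conjugate to $T_\lambda(v)$ by this reflection; this is where the special boundary value $\phi_N^b=\tfrac\pi2$ enters). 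Then the quantity $\mathbb{E}\big[\delta_{\phi_k^f-\phi_k^b[\pi]}\sin^2(\phi_k^f)\big]$, interpreted as integration against the (independent) product of the two marginal densities along the diagonal $\{\phi^f=\phi^b \ [\pi]\}$, equals $\int \sin^2(\phi)\, p_k^f(\phi)\, p_k^b(\phi)\,d\phi$, which converges to $\int \sin^2(\phi)\,m_\lambda(\phi)\,m_\lambda(\tfrac\pi2-\phi)\,d\phi$ for every $k$ with both $k$ and $N-k$ large; the Ces\`aro average over $k\in[1,N]$ then converges to the same limit since only an $o(N)$ fraction of indices are within the mixing scale of the endpoints.

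The main obstacle is making the two limiting operations clean: exchanging $N\to\infty$ with the $\lambda$-integral (dominated convergence, using that the masses $\mathbb{E}[\delta_{\phi_k^f-\phi_k^b[\pi]}\sin^2(\phi_k^f)]$ are bounded uniformly in $k,N,\lambda$ — indeed bounded by $\sup$ of the one-step density, which exists by (H1) — on any compact $\lambda$-range, and that everything vanishes for $\lambda$ outside a fixed compact since the density of states is compactly supported when $V$ is bounded, or else one argues with a weight), and controlling the boundary terms $k=O(1)$ and $N-k=O(1)$ in the Ces\`aro sum, which is routine given the uniform mass bound. A secondary technical point is to justify handling the singular factor $\delta_{\phi_k^f-\phi_k^b[\pi]}$: this is best done by approximating $\delta$ by a smooth bump, applying the theorem, and passing to the limit using the absolute continuity (and continuity in $L^1$) of the laws of $\phi_k^f$, which again follows from (H1). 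Finally I would note that the identification of the $T_\lambda$-invariant density and the fact that it is genuinely a density (not merely a measure) is standard and already invoked implicitly in the statement.
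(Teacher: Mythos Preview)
Your plan is correct and follows essentially the same approach as the paper: specialize Theorem~\ref{thm:(Change-of-direction)} to a test function $G(\lambda,X)=g(\lambda)$ depending only on the energy, rewrite the expectation $\mathbb{E}[\delta_{\phi_k^f-\phi_k^b[\pi]}\sin^2(\phi_k^f)]$ as $\int\sin^2(\phi)\,\rho_{k,\lambda}(\phi)\,\rho_{N-k,\lambda}(\tfrac{\pi}{2}-\phi)\,d\phi$ using independence of the two halves and the reflection symmetry $\phi\mapsto\tfrac{\pi}{2}-\phi$ intertwining forward and backward dynamics, and then pass to the invariant density $m_\lambda$ on each factor. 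In fact you give more care than the paper does to the analytic housekeeping (dominated convergence in $\lambda$, the $o(N)$ boundary indices in the Ces\`aro sum, and mollifying the $\delta$), all of which the paper's proof leaves implicit.
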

\begin{proof}
We apply our formula~\eqref{eq:Change-of-direction-1} in Theorem \ref{thm:(Change-of-direction)}. We choose
$G(s,X)=G(s)$ (that does not depend on $X$) and recognize $\int_\mathbb{R} G(\lambda)\rho(\lambda)d\lambda$. More precisely,
\begin{align*}
\frac{1}{N}\mathbb{E}\Big[\sum_{\lambda\in\sigma(H_{N})}G(\lambda)\Big] & =\int G(\lambda)d\lambda\frac{1}{N}\sum_{k}\mathbb{E}_{\mathcal{P}_{f,1..k}\otimes\mathcal{P}_{b,k+1,...,N}}[\delta_{\phi_{k}^{f}-\phi_{k}^{b}}\sin^{2}(\phi_{k})]\\
 & =\int G(\lambda)d\lambda\frac{1}{N}\sum_{k}\int_{\phi}\rho_{k,\lambda}(\phi)\rho_{N-k,\lambda}(\frac{\pi}{2}-\phi)\sin^{2}(\phi_{k})]
\end{align*}
where $\rho_{k,\lambda}$ and $\rho_{N-k,\lambda}$ are the density
probabilities of the angles of $M_{k}(\lambda)\begin{pmatrix}1\\
0
\end{pmatrix}$ and $M_{N-k}(\lambda)\begin{pmatrix}1\\
0
\end{pmatrix}$. We can then conclude using that $\rho_{k,\lambda}\rightarrow m_{\lambda}$
and $\rho_{N-k,\lambda}\rightarrow m_{\lambda}$ when $k\rightarrow\infty$
and $N-k\rightarrow\infty$.
\end{proof}

\subsection{Brownian and drift for the eigenvectors }

It is well known since the work of Carmona-Klein-Martinelli~\cite{carmona1987anderson},
Goldsheild-Molchanov-Pastur \cite{goldsheid1977random} and Kunz-Souillard
\cite{kunz1980spectre} that the eigenvectors are localized and decay
exponentially from the center of localization. An exact form of the
eigenvectors has been recently proven in the critical case where $V$
is replaced by $\frac{V}{\sqrt{N}}$ in \cite{rifkind2016eigenvectors}.
The authors proved that the eigenvectors in the bulk have the form
$e^{\sigma\frac{B_{|t-u|}}{2}-\gamma|t-u|}$. We claim using our formula
of Theorem \ref{thm:(Change-of-direction)} that a similar result
holds for the tails of the eigenvectors in the non critical
case. 

For the reader's convenience we recall the heuristics of the following
classical results. One can write any product of random matrices $M_{N}=\prod_{i=1}^{N}T_{i}$
as 
\[
\log(\|M_{N}\|)=\log\Big(\prod_{i=1}^{N}\frac{\|M_{i}\|}{\|M_{i-1}\|}\Big)=\sum_{i=1}^{N}\log\Big(\|T_{i}\big(\frac{M_{i-1}}{\|M_{i-1}\|}\big)\|\Big)
\]
In the case when $T_{i}$ are iid and there are some strong mixing
property on $\frac{M_{i-1}}{\|M_{i-1}\|}$, the terms $Y_{i}=\log\Big(\|T_{i}\big(\frac{M_{i-1}}{\|M_{i-1}\|}\big)\|\Big)$
should behave like iid random variables. One can then prove the strong
law of large numbers, the central limit theorem, and Donsker's theorem.
See the paper of Le Page \cite{le1982theoremes} for these results.
One therefore defines a ``mean'', a ``variance'' and a ``random
walk`` as follows.

\begin{defn}
The \emph{Lyapunov exponent} is 
\[
\gamma(\lambda):=\lim_{N\rightarrow\infty}\frac{1}{N}\mathbb{E}\big[\log\|M_{N}(\lambda)\|\big].
\]
The \emph{limit variance} is
\[
\sigma^{2}(\lambda)=\lim_{N\rightarrow\infty}\frac{1}{N}\mathbb{E}\big[(\log\|M_{N}(\lambda)\|-\gamma(\lambda)N)^{2}\big].
\]
The \emph{random walk} is
\[
S_{n}=\frac{1}{\sigma(\lambda)}\big(\log\|M_{n}(\lambda)\|-\gamma(\lambda)n\big).
\]
and we consider its rescaling
\[
W_{N}(t)=\frac{1}{\sqrt{N}}S_{\lfloor Nt\rfloor}.
\]
\end{defn}

Finally, we denote by $W$ the Wiener measure.
\begin{thm}[Limit theorem for products of random matrices]\label{thm:LEPAGE}
We have the following:
\begin{itemize}
\item $\gamma(\lambda)>0$ and $\lim_{N\rightarrow\infty}\frac{1}{N}\log\|M_{N}(\lambda)\|=\gamma$,
almost surely;
\item $\sigma^{2}(\lambda)>0$\label{thm:Sigma};
\item $W_{N}\rightarrow W$ in law.
\end{itemize}
\end{thm}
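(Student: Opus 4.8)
The plan is to reduce everything to the classical statements of Le Page and Furstenberg, which apply once we verify the standard hypotheses for the family of transfer matrices $T(v-\lambda)$ with $v$ distributed according to the absolutely continuous law $\nu$. Concretely, I would first check that the subgroup of $SL_2(\mathbb R)$ generated by $\mathrm{supp}\{T(v-\lambda): v\in\mathrm{supp}\,\nu\}$ is non-compact and strongly irreducible (not contained in, nor stabilizing, a finite union of one-dimensional subspaces of $\mathbb R^2$). Non-compactness is immediate since $T(x)$ has trace $x$, which is unbounded as $v$ ranges over a set of positive Lebesgue measure; strong irreducibility follows because the absolute continuity in \textbf{(H1)} gives an interval of values of $x$, and one checks that no finite set of directions is preserved by $T(x)$ for all $x$ in an interval (a direction $[\cos\psi:\sin\psi]$ is sent by $T(x)$ to a direction that depends nontrivially on $x$ unless $\sin\psi=0$, and $[1:0]\mapsto[x:1]$ moves). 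Under \textbf{(H1)} one in fact gets a much stronger Diophantine/density property, so the Furstenberg positivity criterion applies.

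Granting irreducibility and non-compactness, Furstenberg's theorem gives $\gamma(\lambda)>0$ together with the almost-sure convergence $\frac1N\log\|M_N(\lambda)\|\to\gamma(\lambda)$ (this is the integrated/ergodic form, via Kingman's subadditive ergodic theorem applied to $\log\|M_N\|$, whose subadditivity is just submultiplicativity of the operator norm). For the second bullet, $\sigma^2(\lambda)>0$, the point is that $\sigma^2=0$ would force $\log\|M_N(\lambda)\|-\gamma N$ to be a bounded (tight) sequence, i.e.\ a coboundary in the appropriate cohomological sense; Le Page's analysis of the transfer operator acting on Hölder functions on $\mathbb P^1$ shows that the variance in the martingale CLT decomposition vanishes only in degenerate (conjugate-to-compact) cases, which are excluded here. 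So I would invoke Le Page's spectral-gap theorem for the Markov chain $\phi_k$ on $\mathbb P^1$: the transfer operator has a simple leading eigenvalue with the rest of the spectrum inside a disk of radius $<1$ on a suitable Hölder space, and from this one reads off both $\sigma^2>0$ and the functional CLT.

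For the third bullet, $W_N\to W$ in law, I would write $\log\|M_n(\lambda)\| = \sum_{i=1}^n Y_i$ with $Y_i = \log\|T(v_i-\lambda)\,\widehat{M}_{i-1}\|$ where $\widehat M_{i-1}=M_{i-1}/\|M_{i-1}\|$, exactly as in the excerpt. The pair $(\phi_{i-1}, v_i)$ is a Markov chain with the spectral gap just described, so $(Y_i)$ is a functional of a geometrically ergodic chain; Donsker's invariance principle for such additive functionals (Le Page \cite{le1982theoremes}, or the Gordin martingale-approximation method) yields convergence of $W_N(t)=\frac{1}{\sqrt N}\frac{1}{\sigma}(S_{\lfloor Nt\rfloor})$ — wait, with the normalization in the paper $W_N(t)=\frac1{\sqrt N}S_{\lfloor Nt\rfloor}$ and $S_n=\frac1\sigma(\log\|M_n\|-\gamma n)$, so the fluctuations are already centered and divided by $\sigma$, and Donsker gives convergence to standard Brownian motion $W$. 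The only mild technical nuisance is that $Y_i$ need not be bounded (since $V$ can have unbounded support), so I would either assume or note the needed moment condition $\mathbb E[\log^2\|T(v-\lambda)\|]<\infty$ and $\mathbb E[\log^2\|T(v-\lambda)^{-1}\|]<\infty$, which hold automatically for, e.g., bounded or sub-Gaussian potentials and is the standing hypothesis under which Le Page's results are stated.

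The main obstacle is not any single step but making precise the hypotheses bridging \textbf{(H1)} to the black-box theorems: one must confirm that the Zariski closure / generated subgroup really is all of $SL_2(\mathbb R)$ (or at least non-compact and strongly irreducible) for \emph{every} fixed $\lambda$, including the band edges $\lambda=\pm2$ where the free transfer matrix is parabolic — here absolute continuity of $\nu$ is exactly what saves us, since it perturbs the parabolic matrix into a genuinely hyperbolic/mixing family. After that the three bullets are quotations: positivity from Furstenberg, $\sigma^2>0$ and the spectral gap from Le Page, and the invariance principle from Donsker's theorem for geometrically mixing additive functionals. I would therefore present the proof as: (i) verify irreducibility+non-compactness from \textbf{(H1)}; (ii) cite Furstenberg for $\gamma>0$ and Kingman for the a.s.\ limit; (iii) cite Le Page for the spectral gap, hence $\sigma^2>0$; (iv) cite Le Page/Donsker for $W_N\to W$.
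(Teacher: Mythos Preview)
Your approach is exactly the paper's: the paper does not prove Theorem~\ref{thm:LEPAGE} at all but simply writes ``We refer to \cite[Theorems 2 and 3]{le1982theoremes} for the proof,'' so your sketch---verify the Furstenberg/Le Page hypotheses from \textbf{(H1)} and then quote the black-box theorems---is in fact more detailed than what the paper provides. One small slip: absolute continuity of $\nu$ does not force the support of $v$ (hence the trace $x=v-\lambda$) to be unbounded, so your non-compactness argument as written fails for compactly supported potentials; the clean fix is to note that for any $x_1\neq x_2$ in the support one has $T(x_1)T(x_2)^{-1}=\begin{pmatrix}1 & x_1-x_2\\ 0 & 1\end{pmatrix}$, a nontrivial unipotent, so the generated subgroup of $SL_2(\mathbb R)$ is never compact.
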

We refer to \cite[Theorems 2 and 3]{le1982theoremes} for the proof
of Theorem \ref{thm:LEPAGE}.

We recover then the form of Brownian with drift and both on the right
hand side and the left hand side of the cut. For $\lambda$ an eigenvalue,
and $r_{k}e^{i\phi_{k}}$ the corresponding eigenvector, we note $q_{k}^{\lambda}=\log(r_{k})$.
For scaling, we set $q^{\lambda}(s)=q_{\lfloor Ns\rfloor}^{\lambda}/N$.

\begin{prop}[Tail of eigenvectors]\label{prop:(Tail-of-eigenvectors)}
1) Choosing $\lambda^{(N)}$ uniformly in $\sigma(H^{(N)})$, we have
the following convergence in law 

\[
(\lambda^{(N)},q_{s}^{\lambda^{(N)}})\rightarrow_{N\infty}(\tilde{\lambda},-|\gamma(\tilde{\lambda})(s-x)|)
\]
where $\tilde{\lambda}$ is a random variable with law the limiting
density of state $\rho$ and $x$ an independent variable on $[0,1]$
with uniform law.

\smallskip

2) There exists a sequence of random variables $\{x^{(N)}\}$
with uniform law on $[0,1]$ such that

\[
(\lambda^{(N)},\sqrt{N}\big[q_{s}^{\lambda^{(N)}}+|\gamma(\lambda^{N})(s-x^{(N)})|\big])\rightarrow_{N\infty}(\tilde{\lambda},\sigma(\tilde{\lambda})W_{s-x})
\]
where $W$ is the Wiener measure.
\end{prop}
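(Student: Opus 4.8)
\emph{Proof strategy.} The plan is to feed into the identity of Theorem~\ref{thm:(Change-of-direction)} a test function that reads off the rescaled log-amplitude profile of the eigenvector, and then to recognise, on each side of the random cut, the limit theorems for products of random matrices (Theorem~\ref{thm:LEPAGE}). Recall that $\log(r_{k+1}/r_{k})=\log(\cos\phi_{k}/\sin\phi_{k+1})$, so that, once a normalisation making the peak amplitude $O(1)$ is fixed (e.g.\ the $\ell^{2}$ one), the profile $q^{\lambda}_{\cdot}$ is a measurable functional of $(\lambda,X)$ with $X=\mathcal{P}h(\lambda)$; hence Theorem~\ref{thm:(Change-of-direction)} applies with $G(\lambda,X)=\Phi(\lambda,q^{\lambda}_{\cdot})$ for $\Phi$ bounded continuous, after a harmless truncation away from the null set where some $\sin\phi$ vanishes. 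Reading off the formula, the law of $(\lambda^{(N)},X)$ with $\lambda^{(N)}$ uniform in $\sigma(H^{(N)})$ is generated by: (i)~sampling $\lambda$ from the finite-volume density of states, which converges to $\rho$; (ii)~sampling a cut $k\in[1,N]$ with probability proportional to $w_{k}:=\mathbb{E}_{f\otimes b}[\delta_{\phi_{k}^{f}-\phi_{k}^{b}}\sin^{2}(\phi_{k}^{f})]$ (writing $\mathbb{E}_{f\otimes b}$ for $\mathbb{E}_{\mathcal{P}_{f,1..k}\otimes\mathcal{P}_{b,k+1,...,N}}$); (iii)~running a forward process on $[1,k]$ and an independent backward process on $[k,N]$, tilted by $\delta_{\phi_{k}^{f}-\phi_{k}^{b}}\sin^{2}(\phi_{k}^{f})$.

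On $[1,k]$ the eigenvector equals, up to a scalar, the forward transfer-matrix vector, and on $[k,N]$ it equals, up to a scalar, the backward one; thus $q^{\lambda}_{j}=a_{N}+\log\|M_{j}(\lambda)\binom{1}{0}\|$ for $j\le k$ and $q^{\lambda}_{j}=b_{N}+\log\|\widetilde{M}_{N-j}(\lambda)\binom{1}{0}\|$ for $j\ge k$ (with $\widetilde{M}$ the backward product), the constants $a_{N},b_{N}$ being fixed by continuity at $j=k$ and the global normalisation. Since the profile peaks within $O(1)$ of $k$ and decays geometrically on both sides, $q^{\lambda}_{k}$ and hence $a_{N},b_{N}$ are $O(1)$; moreover $\log\|M_{j}\binom{1}{0}\|-\log\|M_{j}\|=O(1)$ in probability, the direction $\binom{1}{0}$ being almost surely not the contracting one (for a.e.\ $\lambda$). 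Therefore $q^{\lambda}(s)=\tfrac1N\big(\log\|M_{\lfloor Ns\rfloor}(\lambda)\|-\log\|M_{k}(\lambda)\|\big)+O(1/N)$ for $s<k/N$, and symmetrically with $\widetilde{M}$ for $s>k/N$.

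For part~1 the strong law of large numbers in Theorem~\ref{thm:LEPAGE} gives $\tfrac1N\log\|M_{\lfloor Ns\rfloor}(\lambda)\|\to\gamma(\lambda)s$ uniformly, so $q^{\lambda}(s)\to-\gamma(\lambda)|s-x|$ with $x=\lim k/N$; as in the proof of the density-of-states formula, $w_{k}\to\rho(\lambda)$ uniformly in $k$ bounded away from the endpoints, whence $k/N$ becomes uniform on $[0,1]$ and $\lambda^{(N)}\to\tilde\lambda\sim\rho$, jointly and independently. For part~2 one subtracts the drift $|\gamma(\lambda^{(N)})(s-x^{(N)})|$ and multiplies by $\sqrt N$: the deterministic parts cancel and, on $\{s<x\}$, one is left with $\tfrac{\sigma(\lambda)}{\sqrt N}(S_{\lfloor Ns\rfloor}-S_{\lfloor Nx\rfloor})+o(1)$, and with the analogous expression built from the backward random walk on $\{s>x\}$; Donsker's theorem in Theorem~\ref{thm:LEPAGE} then yields convergence to $\sigma(\lambda)(W_{s}-W_{x})$, which is a Brownian motion in $|s-x|$ on each side, the two sides being independent because the forward and backward walks are --- this is the two-sided Brownian motion written $\sigma(\tilde\lambda)W_{s-x}$. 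One takes $x^{(N)}=k^{(N)}/N$ smoothed to be exactly uniform: the density of $k^{(N)}/N$ differs from $1$ by only $O(1/N)$ away from the endpoints (the forward and backward phase laws reaching their invariant measure exponentially fast), so $k^{(N)}/N$ couples to an exactly-uniform $x^{(N)}$ with $\mathbb{E}|x^{(N)}-k^{(N)}/N|=O(1/N)=o(N^{-1/2})$, which suffices; continuity of $\gamma(\cdot)$ and $\sigma(\cdot)$ lets one pass $\lambda^{(N)}\to\tilde\lambda$ through.

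The main obstacle is step~(iii): under the tilt $\delta_{\phi_{k}^{f}-\phi_{k}^{b}}\sin^{2}(\phi_{k}^{f})$ the forward and backward processes are neither free nor mutually independent, and the cut $k^{(N)}$ itself depends on all of the $V$'s. One must show that prescribing the phases $\phi_{k}^{f},\phi_{k}^{b}$ at the cut is a \emph{local} perturbation: by the exponential mixing of Proposition~\ref{lem:mixing} its effect on the trajectory is confined to an $O(1)$-index window around $k$, hence it alters neither the law of large numbers nor the Donsker scaling limit, both of which feel only the $O(N)$-scale behaviour; the same estimate decouples the two sides in the limit and frees the forward walk on $[1,k^{(N)}-O(1)]$ from its dependence on $k^{(N)}$. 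Making all of this uniform in $\lambda$ (for the joint convergence with $\tilde\lambda$) and handling the exceptional phases are the remaining routine points.
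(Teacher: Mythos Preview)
Your proposal is correct and follows the same overall route as the paper: plug a profile functional into Theorem~\ref{thm:(Change-of-direction)}, identify the eigenvector on each side of the cut with the forward/backward transfer-matrix product, and invoke Le~Page's limit theorems (Theorem~\ref{thm:LEPAGE}) for the LLN and Donsker limits, with the cut position becoming asymptotically uniform. Your write-up is in fact more detailed than the paper's on several points (the coupling of $k^{(N)}/N$ to an exactly uniform $x^{(N)}$, the $O(1)$ discrepancy between $\log\|M_j\binom{1}{0}\|$ and $\log\|M_j\|$, continuity of $\gamma,\sigma$).

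The one place where the arguments differ is the treatment of the constraint $\delta_{\phi_k^f-\phi_k^b}$. You neutralise it via the exponential mixing of Proposition~\ref{lem:mixing}, arguing that prescribing the phase at the cut is a local perturbation confined to an $O(1)$ window. The paper instead freezes the common phase $\phi_k^f=\phi_k^b=\phi$ and observes that, conditionally on this value, the forward and backward processes are \emph{exactly} independent (they use disjoint sets of $v_i$'s); since the almost-sure statements of Theorem~\ref{thm:LEPAGE} hold for every starting direction in a set of full measure, they survive this conditioning. This is slicker for part~1), but note that for part~2) the Donsker limit is a statement in law, not almost surely, so your mixing argument is actually the more honest justification there; the paper's proof glosses over this distinction.
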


The first statement is the very classical result of Anderson localization
for the one dimensional model. The eigenvectors decay exponentially
from their center of localization and this center is chosen uniformly
on the domain. The second statement says that the typical deviation
from the decay is the exponential of a Brownian (see Figure \ref{FigureDirichlet}
for an illustration).

Rifkind and Virag \cite{rifkind2016eigenvectors} studied the large
eigenvectors of the one dimensional Anderson model in the continuous
case where the potential is a white noise. It is the limit of the
discrete model in the ``critical regime'' where the potential is
scaled like $V_{\omega}^{(N)}=\frac{1}{\sqrt{N}}V_{\omega}$. In this
regime, one cannot speak of localization because the length
of the decay is as large as the size of the domain. However they proved
the exact law of the form of the eigenvectors 
\[
q_{s}^{\lambda}=-|\gamma(\lambda)(s-x)|+\sigma(\lambda)W_{s-x}.
\]

To make the connection with our previous proposition, one can actually
show that for $V_{\omega}=\epsilon v_{\omega}$, with $\mathbb{E}(v_{\omega}^{2})=\sigma^{2}$,
in the limit $\epsilon\rightarrow0$ and $|\lambda|<2,$ we have

\[
\gamma(\lambda)=\frac{\sigma^{2}}{4-\lambda^{2}}\epsilon^{2}
\]
 and 
\[
\frac{\sigma(\lambda)^{2}}{2}=\frac{\sigma^{2}}{4-\lambda^{2}}\epsilon^{2}.
\]

\begin{figure}
\centering{}\includegraphics[scale=0.4]{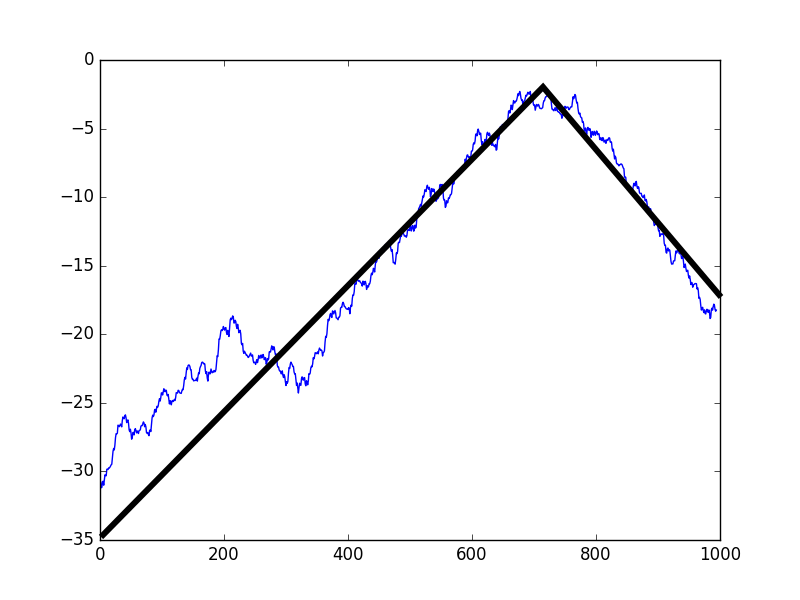}

\caption{A realization of $\log\|M_{n}(\lambda)\|$ for $N=1000$, $v_{\omega}$
uniform on $[0,1]$ with Dirichlet boundary conditions. we add a fit
of the form $|\gamma(\lambda)(s-x)|$.\label{FigureDirichlet}}
\end{figure}

\begin{proof}[Proof of Proposition \ref{prop:(Tail-of-eigenvectors)}]

If in our formula (\ref{eq:Change-of-direction-1}) the term $\delta_{\phi_{k}^{f}-\phi_{k}^{b}}$
were not there, then the forward and the backward processes would be completely independent. Our proposition would have then immediately
followed from Theorem \ref{thm:LEPAGE}, under the conditions that
$r_{k}$ obtained by the forward process and the $r_{k}$ obtained
by the backward process are the same, and that the normalization $\sum_{n=1}^{N}|u_{n}|^{2}=1$ holds. The latter becomes in the limit $\sup q_{s}^{\lambda}=0$.

Therefore we only have to check that the little perturbation around
the cut $k$ has no influence. We fix $\phi$. Conditionally of $\phi_{k}^{b}=\phi$
and $\phi_{k}^{f}=\phi$ the forward and backward processes are independent. The results
of Theorem \ref{thm:LEPAGE} are true asymptotically with probability
$1$. Therefore for any $\phi$ in a set of full Lebesgue measure
in $\mathbb{S}^{1}$ the results of Theorem \ref{thm:LEPAGE} are
true conditionally of $\phi_{k}^{b}=\phi$ and $\phi_{k}^{f}=\phi$.
\end{proof}

\subsection{A temperature profile}

We will use our result to explain some numerical observations which
have been made in \cite{2017JSP...tmp...71D}. In this article, the
authors are interested in the temperature profile of a disordered
chain connected to two thermal baths of temperatures $T_{0}$ and $T_{N}$
at the boundary $0$ and $N$. According to \cite{2017JSP...tmp...71D},
the temperature $T(x)$ at site $x$ is expected to be given in a
certain limit by 

\begin{equation}
T(x)=\sum_{\lambda\in\sigma(H)}|\psi_{\lambda}(x)|^{2}\Big(T_{0}\frac{|\psi_{\lambda}(0)|^{2}}{|\psi_{\lambda}(0)|^{2}+|\psi_{\lambda}(N)|^{2}}+T_{N}\frac{|\psi_{\lambda}(N)|^{2}}{|\psi_{\lambda}(0)|^{2}+|\psi_{\lambda}(N)|^{2}}\Big)\label{eq:TemperatureProfile}
\end{equation}
 where $H$ is our one--dimensional random Schrödinger operator and
$\psi_{\lambda}$ are its eigenvectors.

We prove that $T$ converge to a step function where the transition
from $T_{0}$ and $T_{N}$ happens in a neibourghood of $x=\frac{N}{2}$
at a scale $\sqrt{N}$. This has been observed numerically in \cite{2017JSP...tmp...71D}.

\begin{prop}
For $N$ large enough we have
\[
\lim_{N\rightarrow\infty}\mathbb{E}\big[T(\lfloor\sqrt{N}x+\frac{N}{2}\rfloor)\big]=T_{0}+(T_{N}-T_{0})\int_{\mathbb{R}}\mathbb{P}\Big(\mathcal{N}(0,1)\leq\frac{2\gamma(\lambda)}{\sigma(\lambda)}x\Big)dN(\lambda)
\]
 where $dN(\lambda)$ is the integrated density of states, $\gamma(\lambda)$ is
the Lyapunov exponent and $\sigma(\lambda)$ is the limit variance.
\end{prop}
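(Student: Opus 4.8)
The plan is to plug the temperature formula \eqref{eq:TemperatureProfile} into the main identity \eqref{eq:Change-of-direction-1}, with the test function
\[
G(\lambda,X)=|\psi_\lambda(x)|^2\Big(T_0\tfrac{|\psi_\lambda(0)|^2}{|\psi_\lambda(0)|^2+|\psi_\lambda(N)|^2}+T_N\tfrac{|\psi_\lambda(N)|^2}{|\psi_\lambda(0)|^2+|\psi_\lambda(N)|^2}\Big)
\]
and $x=\lfloor\sqrt N\,y+N/2\rfloor$. Since $\sum_x|\psi_\lambda(x)|^2=1$, the quantity in the big parenthesis is a convex combination of $T_0$ and $T_N$; writing it as $T_0+(T_N-T_0)\,\frac{|\psi_\lambda(N)|^2}{|\psi_\lambda(0)|^2+|\psi_\lambda(N)|^2}$ reduces everything to computing the expectation of $|\psi_\lambda(x)|^2\,\frac{|\psi_\lambda(N)|^2}{|\psi_\lambda(0)|^2+|\psi_\lambda(N)|^2}$ summed over the spectrum. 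The first term simply gives $T_0$ after summing $|\psi_\lambda(x)|^2$ over eigenvalues (again using normalization and $\sum_\lambda|\psi_\lambda(x)|^2=1$ for fixed $x$), which survives the limit.

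Next I would translate the weight $\frac{|\psi_\lambda(N)|^2}{|\psi_\lambda(0)|^2+|\psi_\lambda(N)|^2}$ into the $q^\lambda$ language. In the forward--backward picture, the eigenvector is built as a forward solution on $[1,k]$ glued to a backward solution on $[k,N]$; normalizing so that $\sup_s q_s^\lambda=0$ (the $L^2$ normalization in the limit), Proposition \ref{prop:(Tail-of-eigenvectors)} tells us that after centering, $q_s^\lambda\approx-|\gamma(\lambda)(s-x^{(N)})|$ with fluctuations of size $1/\sqrt N$ governed by $\sigma(\lambda)$ and the localization center $x^{(N)}$ uniform on $[0,1]$. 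Thus $\log|\psi_\lambda(0)|\approx N q_0^\lambda\approx -N\gamma(\lambda)\,x^{(N)}$ (up to $\sqrt N$ Brownian corrections) and $\log|\psi_\lambda(N)|\approx N q_1^\lambda\approx-N\gamma(\lambda)(1-x^{(N)})$, so
\[
\frac{|\psi_\lambda(N)|^2}{|\psi_\lambda(0)|^2+|\psi_\lambda(N)|^2}
=\frac{1}{1+e^{2(\log|\psi_\lambda(N)|-\log|\psi_\lambda(0)|)}}
=\sigma_{\mathrm{logistic}}\!\big(2\log|\psi_\lambda(0)|-2\log|\psi_\lambda(N)|\big),
\]
and the exponent equals $2N\gamma(\lambda)(2x^{(N)}-1)$ plus a $\sqrt N\,\sigma(\lambda)$-Brownian term. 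For this sigmoid to be nondegenerate we need $N\gamma(\lambda)(2x^{(N)}-1)=O(\sqrt N)$, i.e. $x^{(N)}=\tfrac12+O(1/\sqrt N)$: only eigenvectors localized within $\sqrt N$ of the center contribute a value strictly between $0$ and $1$, all others giving $0$ or $1$. Meanwhile $|\psi_\lambda(x)|^2$ with $x=N/2+\sqrt N y$ is essentially a delta at the localization center $x^{(N)} N$, so requiring $|\psi_\lambda(x)|^2$ to be non-negligible also forces $x^{(N)}=\tfrac12+O(1/\sqrt N)$ — consistently. Writing $x^{(N)}=\tfrac12+\tfrac{u}{\sqrt N}$ and letting $N\to\infty$, the Brownian fluctuation at the center has variance of order $\sigma^2(\lambda)\cdot\tfrac12$ (half the interval), and combining with the deterministic drift $2N\gamma(\lambda)\cdot\tfrac{2u}{\sqrt N}=4\sqrt N\,\gamma(\lambda)u$ and then integrating over the location $y$ of the observation point (which pins $u$ near $y/$something through the delta), the CLT from Theorem \ref{thm:LEPAGE} converts the logistic of a Gaussian-shifted quantity into $\mathbb P(\mathcal N(0,1)\le \tfrac{2\gamma(\lambda)}{\sigma(\lambda)}x)$. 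Finally, summing over $k$ (equivalently averaging over the localization center) and using that, away from the edges, the laws are $k$-independent, the prefactor $\rho(\lambda)$ together with $\frac1N\sum_k$ reproduces $dN(\lambda)$, and taking $\mathbb E$ and $N\to\infty$ yields the stated formula.

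The main obstacle will be making the heuristic delta-function manipulations rigorous and, in particular, controlling the joint limit of three coupled quantities: the observation point $x=N/2+\sqrt N y$, the localization center $x^{(N)}N$, and the Brownian fluctuations at scale $\sqrt N$. Concretely one must show (i) that $|\psi_\lambda(x)|^2$, in the limit, concentrates its mass on scale $\sqrt N$ around $x^{(N)}N$ with a profile whose precise shape does not matter after integrating against the slowly-varying temperature weight — this needs a local-central-limit / tightness argument beyond the Donsker statement of Theorem \ref{thm:LEPAGE}; (ii) that the correction from the $\delta_{\phi_k^f-\phi_k^b[\pi]}$ coupling at the cut and the edge effects for $k$ near $0$ or $N$ are negligible, which follows from the mixing Proposition \ref{lem:mixing} exactly as in the proof of Proposition \ref{prop:(Tail-of-eigenvectors)}; and (iii) uniform integrability, so that the bounded observable $T(\cdot)\in[T_0,T_N]$ lets us pass expectations through the weak limits. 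I would handle (i) by conditioning on $\phi_k^f=\phi_k^b=\phi$ for $\phi$ in a full-measure set (as in the earlier proof), so that forward and backward parts are independent two-sided random walks with a common maximum, reducing the problem to a statement about a two-sided random walk with negative drift $-\gamma$ and diffusivity $\sigma$ conditioned to have its max at a given point; its stationary ``bridge'' shape is explicit and integrating the logistic weight against it produces exactly the Gaussian tail probability with the claimed ratio $2\gamma/\sigma$.
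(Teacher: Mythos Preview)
Your overall strategy matches the paper's: plug the temperature observable into Theorem~\ref{thm:(Change-of-direction)}, rewrite the weight via $q_0^\lambda,q_1^\lambda$, and invoke Proposition~\ref{prop:(Tail-of-eigenvectors)}. The gap is that you miss the separation of scales that makes the paper's argument short, and this leads you to propose machinery (local CLT for the eigenvector profile, two--sided walk conditioned on its maximum) that is not needed and that you do not actually carry out.

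In the non--critical regime treated here, exponential localization confines the mass of $|\psi_\lambda|^2$ to within $O(1)$ sites of the cut $k$, \emph{not} $O(\sqrt N)$ as you write in item~(i). On the other hand, the temperature weight, once passed through Proposition~\ref{prop:(Tail-of-eigenvectors)}, becomes for each realization an honest Bernoulli in $\{T_0,T_N\}$ whose parameter depends on $k$ only through $(2k-N)/\sqrt N$. Since the observation point is $x=\lfloor N/2+\sqrt N\,y\rfloor$ and $|\psi_\lambda(x)|^2$ forces $|k-x|\le\alpha(N)$ for any $1\ll\alpha(N)\ll\sqrt N$, one may replace $k$ by $x$ in that parameter at cost $o(1)$. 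The (now constant) temperature factor pulls out, and the remaining sum $\sum_k\mathbb E_{\mathcal P_f\otimes\mathcal P_b}[|\psi_\lambda(x)|^2\,\delta_{\phi_k^f-\phi_k^b}\sin^2\phi_k]$ is identified with $dN(\lambda)/d\lambda$ directly from the main formula applied to $G=1_A(\lambda)|\psi_\lambda(x)|^2$ and $\sum_\lambda|\psi_\lambda(x)|^2=1$; no information on the \emph{shape} of $|\psi_\lambda|^2$ is used. Your picture of a sigmoid that is ``nondegenerate on a $\sqrt N$-window of localization centers'' conflates the localization length (order~$1$ here) with the scale relevant in the critical Rifkind--Vir\'ag setting; for each eigenvector the sigmoid is essentially $0$ or $1$, and only the \emph{probability} of each value varies smoothly in $(2k-N)/\sqrt N$. (Minor slip: your logistic identity has the sign of the exponent reversed.)
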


The Lyapunov exponent is positive, continuous and so is bounded below on the support of $\sigma(H)$. The variance $\sigma(\lambda)$
is bounded as well. Therefore, uniformly in $\lambda$, $\mathbb{P}(\mathcal{N}(0,1)\leq\frac{\sigma(\lambda)}{2\gamma(\lambda)}x)\rightarrow0$
for $x\rightarrow-\infty$ and $\mathbb{P}(\mathcal{N}(0,1)\leq\frac{\sigma(\lambda)}{2\gamma(\lambda)}x)\rightarrow1$
for $x\rightarrow\infty$. We have then $T(y)=T_{0}$ for $\frac{N}{2}-y\gg\sqrt{N}$
and $T(y)=T_{N}$ for $y-\frac{N}{2}\gg\sqrt{N}$. This is the step function numerically observed in~\cite{2017JSP...tmp...71D}.

\begin{proof}
We use our formula and write: 
\begin{multline*}
\mathbb{E}(T(x)) =\sum_{k\in[0,N]}\int_{\mathbb{R}}d\lambda\mathbb{E}_{\mathcal{P}_{f,1..k}\otimes\mathcal{P}_{b,k+1,...,N}}\Big[|\psi_{\lambda}(x)|^{2}\\
\left(T_{0}\frac{|\psi_{\lambda}(0)|^{2}}{|\psi_{\lambda}(0)|^{2}+|\psi_{\lambda}(N)|^{2}}+T_{N}\frac{|\psi_{\lambda}(N)|^{2}}{|\psi_{\lambda}(0)|^{2}+|\psi_{\lambda}(N)|^{2}}\right)\delta_{\phi_{k}^{f}-\phi_{k}^{b}}\sin^{2}(\phi_{k})\Big] 
\end{multline*}
With the notation of Proposition \ref{prop:(Tail-of-eigenvectors)},
we write
\begin{multline*}
  T_{0}\frac{|\psi_{\lambda}(0)|^{2}}{|\psi_{\lambda}(0)|^{2}+|\psi_{\lambda}(N)|^{2}}+T_{N}\frac{|\psi_{\lambda}(N)|^{2}}{|\psi_{\lambda}(0)|^{2}+|\psi_{\lambda}(N)|^{2}}\\
 =T_{0}\frac{e^{Nq_{0}^{\lambda^{(N)}}}}{e^{Nq_{0}^{\lambda^{(N)}}}+e^{Nq_{1}^{\lambda^{(N)}}}}+T_{1}\frac{e^{Nq_{1}^{\lambda^{(N)}}}}{e^{Nq_{0}^{\lambda^{(N)}}}+e^{Nq_{1}^{\lambda^{(N)}}}}.
\end{multline*}
 Therefore in the limit $N\to\infty$, this converges to $T_{0}$ for $q_{0}^{\lambda}>q_{1}^{\lambda}$
and $T_{1}$ for $q_{0}^{\lambda}<q_{1}^{\lambda}$. We have then
at the limit a Bernoulli $T_{int}$ with parameter given by Proposition
\ref{prop:(Tail-of-eigenvectors)}:
\[
T_{int}=\begin{cases}
T_{0} & \text{with probability }\mathbb{P}\left(\mathcal{N}(0,1)\leq\frac{(2k-N)\gamma(\lambda)}{\sqrt{N}\sigma(\lambda)}\right),\\
T_{N} & \text{with probability }\mathbb{P}\left(\mathcal{N}(0,1)\geq\frac{(2k-N)\gamma(\lambda)}{\sqrt{N}\sigma(\lambda)}\right).
\end{cases}
\]
 In order to conclude, we recall that the whole mass of $|\psi_{\lambda}|^{2}$
is around a few number of sites around $k$ so 
\begin{multline*}
\mathbb{E}(T(x)) =\int_{\mathbb{R}}d\lambda\sum_{k\in[x-\alpha(N),x+\alpha(N)]}\mathbb{E}_{\mathcal{P}_{f,1..k}\otimes\mathcal{P}_{b,k+1,...,N}}\Bigg[|\psi_{\lambda}(x)|^{2}\\
\left(T_{0}\frac{|\psi_{\lambda}(0)|^{2}}{|\psi_{\lambda}(0)|^{2}+|\psi_{\lambda}(N)|^{2}}+T_{N}\frac{|\psi_{\lambda}(N)|^{2}}{|\psi_{\lambda}(0)|^{2}+|\psi_{\lambda}(N)|^{2}}\right)\delta_{\phi_{k}^{f}-\phi_{k}^{b}}\sin^{2}(\phi_{k})\Bigg]\\
 +O\left(e^{-\gamma(\lambda)\alpha(N)}\right),
\end{multline*}
where we have chosen $\alpha(N)$ such that $\sqrt{N}\gg\alpha(N)\text{\ensuremath{\gg}1}$.
Moreover for large $N$, 
\[
\mathbb{P}(\mathcal{N}(0,1)\geq\frac{(2x-N)\gamma(\lambda)}{\sqrt{N}\sigma(\lambda)})=\mathbb{P}(\mathcal{N}(0,1)\geq\frac{(2k-N)\gamma(\lambda)}{\sqrt{N}\sigma(\lambda)})+o(1),
\]
and we have then
\begin{multline*}
\mathbb{E}(T(x)) =\int_{\mathbb{R}}d\lambda\sum_{k\in[x-\alpha(N),x+\alpha(N)]}\mathbb{E}_{\mathcal{P}_{f,1..k}\otimes\mathcal{P}_{b,k+1,...,N}}\Bigg[|\psi_{\lambda}(x)|^{2}\\
\left(T_{0}+(T_{N}-T_{0})\mathbb{P}\left(\mathcal{N}(0,1)\geq\frac{(2x-N)\gamma(\lambda)}{\sqrt{N}\sigma(\lambda)}\right)\right)\delta_{\phi_{k}^{f}-\phi_{k}^{b}}\sin^{2}(\phi_{k})\Bigg]+o(1) 
\end{multline*}
Finally we use the following formula, for $x$ not close to the edges

\[
\sum_{k\in[0,N]}\mathbb{E}_{\mathcal{P}_{f,1..k}\otimes\mathcal{P}_{b,k+1,...,N}}[|\psi_{\lambda}(x)|^{2}\delta_{\phi_{k}^{f}-\phi_{k}^{b}}\sin^{2}(\phi_{k})]=\frac{dN(\lambda)}{d\lambda}+o(1).
\]
Indeed, for any $A$ Borel set of $\mathbb{R}$,
\begin{align*}
 & \int_{\mathbb{R}}1_{A}(\lambda)\frac{dN(\lambda)}{d\lambda}d\lambda\\
 & \qquad=\lim\frac{1}{N}\mathbb{E}(Tr(1_{A}(H)))\\
 & \qquad=\frac{1}{N}\sum_{x}\mathbb{E}\Big[\sum_{\lambda\in A\cap\sigma(H)}|\psi_{\lambda}(x)|^{2}\Big]\\
 & \qquad=\int1_{A}(\lambda)d\lambda\frac{1}{N}\sum_{x}\sum_{k\in[0,N]}\mathbb{E}_{\mathcal{P}_{f,1..k}\otimes\mathcal{P}_{b,k+1,...,N}}\Big[|\psi_{\lambda}(x)|^{2}\delta_{\phi_{k}^{f}-\phi_{k}^{b}}\sin^{2}(\phi_{k})\Big].
\end{align*}
We then note that the left term is asymptotically independent of $x$
for $x$ not close to the edges. Therefore 
\begin{align*}
 & \int_{\mathbb{R}}1_{A}(\lambda)\frac{dN(\lambda)}{d\lambda}d\lambda\\
 & \qquad=\int1_{A}(\lambda)d\lambda\sum_{k\in[0,N]}\mathbb{E}_{\mathcal{P}_{f,1..k}\otimes\mathcal{P}_{b,k+1,...,N}}[|\psi_{\lambda}(x)|^{2}\delta_{\phi_{k}^{f}-\phi_{k}^{b}}\sin^{2}(\phi_{k})]+o(1).
\end{align*}
The proposition then follows, namely we have
\[
\mathbb{E}(T(x))=T_{0}+\int_{\mathbb{R}}(T_{N}-T_{0})\mathbb{P}\Big(\mathcal{N}(0,1)\geq\frac{(2x-N)\gamma(\lambda)}{\sqrt{N}\sigma(\lambda)}\Big)\frac{dN(\lambda)}{d\lambda}d\lambda+o(1)
\]
as we wanted.
\end{proof}

\subsection{Periodic boundary conditions}

We have tried to obtain a similar result for periodic boundary conditions.
With the multiscale analysis tools \cite{frohlich1983}, one has the
exponential decay from the center of localization. But it would be
also interesting to have an interpretation with forward backward process
in this case.

In the critical regime, one would expect the form of the eigenvectors
to be like $e^{F(s)}$, on $\mathbb{R}/2\pi\mathbb{Z}$ with $F(s)=-\gamma\min(|s-u|,|s-u+\pi|)+\sigma\tilde{B}_{s-u}$
with $u$ uniformly chosen on $[0,2\pi]$ and $\tilde{B}$ a Brownian
bridge. So far we have not been able to prove this statement rigorously., but our intuition seems to be confirmed by numerical simulations (see Figure~\ref{figurePeriodic}).

\begin{rem*}
The condition $u_{-1}=u_{N+1}=0$ in the Dirichlet case has to be
replaced by $\text{Tr}(M_{N}(\lambda))=2$. Indeed, let $u_{n}$ be
an eigenvector of eigenvalue $\lambda$ and $z=\begin{pmatrix}u_{1}\\
u_{0}
\end{pmatrix}$. Then, periodic boundary conditions mean $M_{N}(\lambda)z=z$. So
$1$ is an eigenvalue of $M_{N}(\lambda)$. Therefore $1$ is a solution
of $x^{2}-Tr(M_{N}(\lambda))x+1=0$ and so $\text{Tr}(M_{N}(\lambda))=2$.
\end{rem*}

\begin{figure}[H]
\centering{}\includegraphics[scale=0.4]{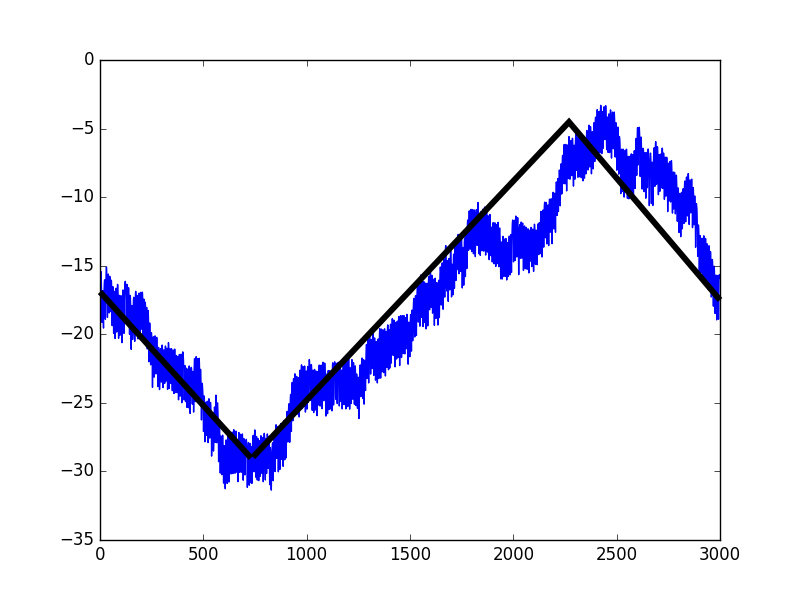}

\caption{A realization of $\log\|M_{n}(\lambda)\|$ with periodic boundary
conditions for $N=3000$, $v_{\omega}$ uniform on $[0,0.3]$. We
add a fit of the form $-\gamma\min(|s-u|,|s-u+\pi|)$.\label{figurePeriodic}}
\end{figure}

\section{Proof of Theorem \ref{thm:(Change-of-direction)}}
\begin{proof}
We have that $\lambda$ is an eigenvalue if and only if $\phi_{N}=\frac{\pi}{2}[\pi]$,
therefore
\[
\mathbb{E}\Big[\sum_{\lambda\in\sigma(H),X=\mathcal{P}h(\lambda)}G(\lambda,X)\Big]=\mathbb{E}\Big[\sum_{\lambda:\theta_{N}(\lambda)\in\frac{\pi}{2}+\pi\mathbb{Z},\text{ }X=\mathcal{F}(\lambda)}G(\lambda,X)\Big].
\]

\begin{rem}
$\theta_{N}:\lambda\rightarrow\theta_{N}(\lambda)$ is continuous
and strictly increasing (see the calculations below). 
\end{rem}
For finite $N$, the inverse function $\theta_{N}^{-1}$ is continuous,
so are $G$, $X$. We can therefore write
\begin{multline*}
\mathbb{E}\Big[\sum_{\lambda:\theta_{N}(\lambda)\in2\pi\mathbb{Z},X=\mathcal{F}(\lambda)}G(\lambda,X)\Big]\\
=\lim_{\epsilon\rightarrow0}\mathbb{E}\left[\frac{1}{2\epsilon}\sum_{n\in\mathbb{Z}}\int_{2\pi n+\pi-\epsilon}^{2\pi n+\pi+\epsilon}\sum_{\lambda:\theta_{N}(\lambda)=s,\text{ }X=\mathcal{F}(\lambda)}G(\lambda,X)ds\right].
\end{multline*}
The rest follows from a change of variable. Let us introduce $I_{\epsilon}=\frac{\pi}{2}+\cup_{n\in\mathbb{Z}}[\pi n-\epsilon,\pi n+\epsilon]$
and $\mathcal{P}_{\epsilon}(G)$:

\begin{align*}
 \mathcal{P}_{\epsilon}(G)&=\mathbb{E}\left[\frac{1}{2\epsilon}\sum_{n\in\mathbb{Z}}\int_{2\pi n-\epsilon}^{2\pi n+\epsilon}\sum_{\lambda:\theta_{N}(\lambda)=s}G(\lambda,\mathcal{F}(\lambda))ds\right]
\\
 &=\mathbb{E}\left[\int_{\mathbb{R}}G(\lambda,\mathcal{F}(\lambda))|\frac{d\theta_{N}(\lambda)}{d\lambda}|\frac{1}{2\epsilon}1_{\theta_{N}(\lambda)\in I_{\epsilon}}d\lambda\right].
\end{align*}
Then
\begin{align*} 
\frac{d\theta_{N}(\lambda)}{d\lambda} & =\frac{d\phi_{N}(\lambda)}{d\lambda}\\
 & =\frac{d}{d\lambda}\left[\prod_{k=1}^{N}T(v_{\omega}(k)-\lambda)\phi_{0}\right]\\
 & =\sum_{k=1}^{N}\frac{d}{d\phi}\left[\prod_{i=k+1}^{N}T(v_{\omega}(i)-\lambda)\right]_{v_{\omega}(N),...,v_{\omega}(k+1)}\times\\
 & \qquad\times\frac{dT(V_{\omega}(k)-\lambda)}{d\lambda}\prod_{i=1}^{k-1}T(V_{\omega}(i)-\lambda)\phi_{0}\\
 & =\sum_{k=1}^{N}\frac{d\phi_{N}}{d\phi_{k}}|_{v_{\omega}(N),...,v_{\omega}(k+1)}\cdot\frac{d}{d\lambda}[T(v_{\omega}(k)-\lambda)](\phi_{k-1}).
\end{align*}
In this formula appears the term $\frac{d\phi_{N}}{d\phi_{k}}|_{v_{\omega}(N),...,v_{\omega}(k+1)}$.
It is this term that changes the law from a forward process to a backward
process. We then calculate $\frac{d}{d\lambda}[T(v_{\omega}(k)-\lambda)](\phi_{k})\big]$
with 
\[
\begin{pmatrix}u_{k+1}\\
u_{k}
\end{pmatrix}=\begin{pmatrix}(v-\lambda)u_{k}+u_{k-1}\\
u_{k}
\end{pmatrix},
\]
\[
\frac{d}{d\lambda}\begin{pmatrix}u_{k+1}\\
u_{k}
\end{pmatrix}=\begin{pmatrix}-u_{k}\\
0
\end{pmatrix},
\]
and thus 
\[
\frac{d}{d\lambda}[T(V_{\omega}(k)-\lambda)](\phi_{k-1})\big]=\text{\ensuremath{\frac{\begin{pmatrix}u_{k+1}\\
u_{k}
\end{pmatrix}\wedge\begin{pmatrix}-u_{k}\\
0
\end{pmatrix}}{\|\begin{pmatrix}u_{k+1}\\
u_{k}
\end{pmatrix}\|^{2}}}=}\frac{u_{k}^{2}}{u_{k}^{2}+u_{k+1}^{2}}=\sin^{2}\phi_{k}.
\]
We carry on the calculation, 
\begin{align*}
\mathcal{P}_{\epsilon}(G) & =\mathbb{E}\Big[\int_{\mathbb{R}}G(\lambda,\mathcal{F}(\lambda))|\frac{d\theta_{N}(\lambda)}{d\lambda}|\frac{1}{2\epsilon}1_{\theta_{N}(\lambda)\in I_{\epsilon}}d\lambda\Big]
\\
 & =\sum_{k=1}^{N}\int_{\mathbb{R}}d\lambda\Big[\int\cdots\int d\nu(v_{1})...d\nu(v_{n})G(\lambda,\mathcal{F}(\lambda))\frac{d\phi_{N}}{d\phi_{k}}\cdot\sin^{2}(\phi_{k})\Big]\frac{1}{2\epsilon}1_{\theta_{N}(\lambda)\in I_{\epsilon}}.
\end{align*}
We artificially add a variable $\phi$ as follows:
\begin{align*}
\mathcal{P}_{\epsilon}(G) & =\sum_{k=1}^{N}\int_{\mathbb{R}}d\lambda\big[\int...\int d\nu(v_{1})...d\nu(v_{k})\int_{\mathbb{S}^{1}}d\phi\delta_{\phi_{k}}(\phi)\\
 & \qquad\int...\int d\nu(v_{k+1})...d\nu(v_{N})G(\lambda,\mathcal{F}(\lambda))\frac{d\phi_{N}}{d\phi}\cdot\sin^{2}(\phi_{k})\big]\frac{1}{2\epsilon}1_{\theta_{N}(\lambda)\in I_{\epsilon}}.
\end{align*}
Then we use Remark \ref{rem:Change} and get
\begin{align*}
\mathcal{P}_{\epsilon}(G) & =\sum_{k=0}^{N}\int_{\mathbb{R}}d\lambda\Bigg[\int...\int d\nu(v_{1})...d\nu(v_{k})\int_{\mathbb{S}^{1}}\\
 & \qquad\int...\int d\phi_{N}d\nu(v_{k+1})...d\nu(v_{N})\delta_{\phi_{k}}(\phi)G(\lambda,\mathcal{F}(\lambda))\sin^{2}(\phi_{k})\Bigg]\frac{1}{2\epsilon}1_{\theta_{N}(\lambda)\in I_{\epsilon}}\\
 & =\sum_{k=0}^{N}\int_{\mathbb{R}}d\lambda \mathbb{E}_{\mathcal{P}_{f,1..k}\otimes\mathcal{P}_{b,k+1,...,N}^{u}}\left[G(\lambda,X)\delta_{\phi_{k}^{f}-\phi_{k}^{b}}\sin^{2}(\phi_{k})\frac{1}{2\epsilon}1_{\phi_{N}\in I_{\epsilon}/\pi\mathbb{Z}}\right]
\end{align*}
where $\mathcal{P}_{f,1..k}\otimes\mathcal{P}_{b,k+1,...,N}^{u}$ is
the forward--backward process with $\mu_{b}$ the uniform law on $\mathbb{S}^{1}$. 
We can then conclude, by taking the limit $\frac{1}{2\epsilon}1_{\phi_{N}\in I_{\epsilon}/2\pi\mathbb{Z}}d\phi_{N}\rightarrow\delta_{0}$. 
\end{proof}
\bibliographystyle{siam}
\bibliography{BiblioLocalisationAnderson}

\end{document}